\newtheorem{proposition}{Proposition}
\begin{document}
\medskip
%
\title{Inverse of a Special Matrix and Application}

\author{\IEEEauthorblockN{Thuan Nguyen}
\IEEEauthorblockA{School of Electrical and Computer Engineering, Oregon State University, Corvallis, OR, 97331\\
Email: nguyeth9@oregonstate.edu}
}


%


\maketitle

\begin{abstract}
The matrix inversion is an interesting topic in algebra mathematics. However, to determine an inverse matrix from a given matrix is required many computation tools and time resource  if the size of matrix is huge. In this paper, we have shown an inverse closed form for an interesting matrix which has much applications in communication system. Base on this inverse closed form, the channel capacity closed form of a communication system can be determined via the error rate parameter $\alpha$. 
\end{abstract}

Keywords: Inverse matrix, convex optimization, channel capacity.

%
\IEEEpeerreviewmaketitle

\section{Matrix Construction}

In Wireless communication system or Free Space Optical communication system, due to the shadow effect or the turbulent of environment, the channel conditions can be flipped from ``good" to ``bad" or ``bad" to ``good" state such as Markov model after the transmission time $\sigma$ \cite{mcdougall2003sensitivity} \cite{wang1995finite}. For simple intuition, in ``bad" channel, a signal will be transmitted incorrectly and in ``good" channel, the signal is received perfectly. Suppose a system has total $n$ channels, the ``good" channel is noted as ``1" and ``bad" channel is ``0", respectively, the transmission time between transmitter and receiver is $\sigma$, the probability the channel is flipped after the transmission time $\sigma$ is $\alpha$. We note that if the system using a binary code such as On-Off Keying in Free Space Optical communication, then the flipped probability $\alpha$ is equivalent to the error rate.

Consider a simple case for $n=2$, suppose that at the beginning, both channel is ``good" channel, the probability of system has both of channels are ``good" after transmission time $\sigma$, for example, is $(1-\alpha)^2$. Let call $A_{ij}$ is the probability of system from the state has $i-1$ ``good" channels and $n-i+1$ ``bad" channels transfers to state has $j-1$ ``good" and $n-j+1$ ``bad" channels. Obviously that $1 \leq i \leq n+1$ and $1 \leq j \leq n+1$. For example, the  transition matrix $A_2$ and $A_3$ for $n=2$ and $n=3$ are constructed respectively as follows:\\

\begin{small}
$\begin{array}{cc}
A_2 = \begin{bmatrix}
(1-\alpha)^2 &2\alpha(1-\alpha) & \alpha^2\\
\alpha(1-\alpha) &(1-\alpha)^2+\alpha^2 & \alpha(1-\alpha)\\
\alpha^2  & 2\alpha(1-\alpha) & (1-\alpha)^2 \\
\end{bmatrix}.
\end{array} $
\end{small}\\

\begin{tiny}
$\begin{array}{cc}
A_3 = \begin{bmatrix}
(1-\alpha)^3 &3(1-\alpha)^2\alpha & 3\alpha^2(1-\alpha) & \alpha^3\\
(1-\alpha)^2\alpha &   2(1-\alpha)\alpha^2+(1-\alpha)^3 & 2(1-\alpha)^2\alpha+\alpha^3 & (1-\alpha)\alpha^2\\
(1-\alpha)\alpha^2 &  2(1-\alpha)^2\alpha+\alpha^3 & 2(1-\alpha)\alpha^2+(1-\alpha)^3 & (1-\alpha)^2\alpha\\
\alpha^3 & 3\alpha^2(1-\alpha)& 3(1-\alpha)^2\alpha & (1-\alpha)^3\\
\end{bmatrix}.
\end{array}$
\end{tiny}

These transition matrices are obviously size $(n+1)\times(n+1)$ since the number of ``good" channels can achieve $n+1$ discrete values from $0,1,\dots,n$. Moreover, these class matrices have several interesting properties: (1) all entries in matrix $A_n$ can be determined by Proposition \ref{prop:SRC}; (2) the inverse of matrix $A_n$ is given by Proposition \ref{inverse}. Moreover, this matrices are obviously central symmetric matrix.

\begin{proposition}
\label{prop:SRC}
For $n$ channels system, the transition matrix $A_n$ has size $(n+1) \times (n+1)$ and all entries ${A_n}_{ij}$ in row $i$ column $j$ will be established by
{\tiny $${A_n}_{ij}=\sum_{s=\max(i-j,0)}^{s=\min(n+1-j,i-1)}{j-i+s\choose n+1-i} {s\choose i-1} \alpha^{j-i+2s} (1-\alpha)^{n-(j-i+2s)}$$  }
\end{proposition}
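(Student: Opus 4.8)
The plan is to derive the entry ${A_n}_{ij}$ directly from its probabilistic meaning and then collect terms to match the stated closed form; no induction is needed, since a single combinatorial count suffices. By construction, row $i$ corresponds to a state with $i-1$ good channels and $n-i+1$ bad channels, and column $j$ to a state with $j-1$ good and $n-j+1$ bad channels. Since each of the $n$ channels flips independently with probability $\alpha$ over the interval $\sigma$, I would split the channels into the good group and the bad group and track how many flips occur within each group, as these two counts determine the final number of good channels.

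Concretely, let $s$ denote the number of the $i-1$ good channels that flip to ``bad'', and let $m$ denote the number of the $n-i+1$ bad channels that flip to ``good''. The resulting number of good channels is $(i-1-s)+m$, and requiring this to equal $j-1$ forces the single linear relation $m=s+(j-i)$. The probability of a fixed pattern of flips is a product of two independent binomial contributions, namely $\binom{i-1}{s}\alpha^{s}(1-\alpha)^{i-1-s}$ from the good group and $\binom{n-i+1}{m}\alpha^{m}(1-\alpha)^{n-i+1-m}$ from the bad group. Multiplying these and substituting $m=s+(j-i)$, the $\alpha$-exponents add to $s+m=2s+(j-i)$ and the $(1-\alpha)$-exponents add to $(i-1-s)+(n-i+1-m)=n-\bigl(2s+(j-i)\bigr)$, which already reproduces the power-of-$\alpha$ and power-of-$(1-\alpha)$ factors appearing in the statement.

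Summing over all admissible $s$ then yields ${A_n}_{ij}$, so the remaining task is to pin down the summation range. The two binomial coefficients impose the constraints: $\binom{i-1}{s}$ requires $0\le s\le i-1$, while $\binom{n-i+1}{s+j-i}$ requires $0\le s+(j-i)\le n-i+1$, i.e. $i-j\le s\le n+1-j$. Intersecting these gives exactly $\max(i-j,0)\le s\le\min(n+1-j,i-1)$, matching the stated limits. At this point the derived expression $\sum_{s}\binom{i-1}{s}\binom{n-i+1}{s+j-i}\alpha^{j-i+2s}(1-\alpha)^{n-(j-i+2s)}$ coincides with the proposition, provided the two binomial coefficients in the statement are read as ``$i-1$ choose $s$'' and ``$n+1-i$ choose $j-i+s$''.

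The computation itself is routine; the genuine work is index bookkeeping, so the main obstacle will be deriving the constraint $m=s+(j-i)$ cleanly and translating the nonnegativity of both binomial arguments into the correct summation bounds, rather than any hard algebra. As a final sanity check I would verify a few entries of $A_2$ and $A_3$ against the formula, for instance confirming that ${A_2}_{22}=(1-\alpha)^2+\alpha^2$ arises precisely from the two admissible terms $s=0$ and $s=1$, which validates both the summand and the limits simultaneously.
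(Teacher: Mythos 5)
Your proposal is correct and follows essentially the same argument as the paper: count the $s$ good-to-bad flips, deduce that $j-i+s$ bad channels must flip to good, multiply the two independent binomial contributions, and intersect the nonnegativity constraints to get the summation limits $\max(i-j,0)\le s\le\min(n+1-j,i-1)$. Your reading of the binomial coefficients as $\binom{i-1}{s}$ and $\binom{n+1-i}{j-i+s}$ (with upper and lower entries transposed relative to the statement's literal typesetting) is the intended one, as the displayed examples $A_2$ and $A_3$ confirm.
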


\begin{proof}
From the definition, ${A_n}_{ij}$ is the probability from state has $i-1$ ``good" channels or $i-1$ bit ``1"  transfer to state  has $j-1$ ``good" channels or $j-1$ bit ``1". Therefore, suppose $s$ is the number channels in $i-1$ ``good" channels that is flipped to ``bad" channels  after the transmission time $\sigma$ and $0 \leq s \leq i-1$. Thus, to maintain $j-1$ ``good" channels after the time $\sigma$, the number of ``bad" channels in $n+1-i$ ``bad" channels must be flipped to ``good" channels is:
$$(j-1)-((i-1)-s)=j-i+s$$

Therefore, the total number of channels are flipped their state after transmission time $\sigma$ is:
$$s+(j-i+s)=j-i+2s$$ 
 and the total number of channels that preserves their state after transmission time $\sigma$ is  $n-(j-i+2s)$. However, $0 \leq s \leq i-1$. Similarly, the number of ``bad" channels in $n+1-i$ ``bad" channels must be flipped to ``good" channels should be in $0 \leq j-i+s \leq n+1-i$. Hence:
 
$$
\begin{cases}
\max {s} = \min (n+1-j;i-1 ) \\
\min {s} = \max (0;i-j )
\end{cases}
$$

Therefore, ${A_n}_{ij}$ can be determined by below form:

{\tiny $${A_n}_{ij}=\sum_{s=\max(i-j,0)}^{s=\min(n+1-j,i-1)}{j-i+s\choose n+1-i} {s\choose i-1} \alpha^{j-i+2s} (1-\alpha)^{n-(j-i+2s)}$$  }

\end{proof}

\begin{proposition}
\label{inverse}
All the entries of inverse matrix $A_n^{-1}$ given in Proposition \ref{prop:SRC} can be determined via original transition matrix ${A_n}$ for $\forall$ $\alpha \neq 1/2$.

 $${A_n}_{ij}^{-1}=  \dfrac{(-1)^{i+j}}{(1-2\alpha)^{n}} {A_n}_{ij}$$

\end{proposition}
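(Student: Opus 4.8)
The plan is to view $A_n$ as one member $A_n(\alpha)$ of a one-parameter family of transition matrices closed under multiplication, so that the inverse is obtained merely by evaluating the same family at a different parameter. Writing $\alpha \ast \beta := \alpha + \beta - 2\alpha\beta$ for the net flip probability produced by two independent flip rounds with probabilities $\alpha$ and $\beta$, the first step is the Chapman--Kolmogorov (semigroup) identity
\begin{equation*}
A_n(\alpha)\,A_n(\beta) = A_n(\alpha \ast \beta),
\end{equation*}
valid because the $n$ channels flip independently and identically, so a flip-$\alpha$ step followed by a flip-$\beta$ step is a single flip-$(\alpha\ast\beta)$ step, and the lumped chain counting good channels inherits this product rule. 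Since $A_n(0)=I$ (no flips fixes every state), $A_n(\alpha)$ is invertible precisely when $\alpha\ast\beta=0$ is solvable, i.e. $\beta=\tfrac{\alpha}{2\alpha-1}$, which is legitimate precisely when $\alpha\neq 1/2$; then $A_n(\alpha)^{-1}=A_n\!\big(\tfrac{\alpha}{2\alpha-1}\big)$.

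Next I would show that the closed form of Proposition~\ref{prop:SRC}, evaluated at $\beta=\tfrac{\alpha}{2\alpha-1}$, reproduces the stated expression. The crucial structural feature is that the binomial coefficients in Proposition~\ref{prop:SRC} do not depend on $\alpha$, and every summand of ${A_n}_{ij}$ is homogeneous of degree $n$ in the pair $(\alpha,\,1-\alpha)$, since $(j-i+2s)+\big(n-(j-i+2s)\big)=n$. Because $\beta=\tfrac{-\alpha}{1-2\alpha}$ and $1-\beta=\tfrac{1-\alpha}{1-2\alpha}$, each analytic factor transforms as
\begin{equation*}
\alpha^{\,j-i+2s}(1-\alpha)^{\,n-(j-i+2s)} \;\longmapsto\; \frac{(-1)^{\,j-i+2s}\,\alpha^{\,j-i+2s}(1-\alpha)^{\,n-(j-i+2s)}}{(1-2\alpha)^{\,n}},
\end{equation*}
the two denominators always combining to $(1-2\alpha)^{n}$ regardless of $s$. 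Since $(-1)^{\,j-i+2s}=(-1)^{\,i+j}$ for all $s$, both the sign $(-1)^{i+j}$ and the factor $(1-2\alpha)^{-n}$ factor out of the (unchanged) summation range, leaving exactly ${A_n}_{ij}$ inside. This yields $A_n\!\big(\tfrac{\alpha}{2\alpha-1}\big)_{ij}=\tfrac{(-1)^{i+j}}{(1-2\alpha)^{n}}{A_n}_{ij}$, which is the claim.

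The main obstacle is the semigroup identity itself; the purely probabilistic justification (that the good-channel count lumps consistently, so the product of reduced matrices equals the reduction of the composed dynamics) deserves a rigorous algebraic backing. The cleanest such argument identifies $A_n(\alpha)$ with the action on degree-$n$ homogeneous polynomials of the linear substitution $x\mapsto(1-\alpha)x+\alpha y$, $y\mapsto\alpha x+(1-\alpha)y$, whose matrix is $\left[\begin{smallmatrix}1-\alpha & \alpha\\ \alpha & 1-\alpha\end{smallmatrix}\right]$: indeed ${A_n}_{ij}$ is the coefficient of $x^{\,j-1}y^{\,n-j+1}$ in $\big((1-\alpha)x+\alpha y\big)^{i-1}\big(\alpha x+(1-\alpha)y\big)^{n-i+1}$. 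Composing two such substitutions multiplies their $2\times2$ matrices, and since
\begin{equation*}
\begin{bmatrix}1-\alpha & \alpha\\ \alpha & 1-\alpha\end{bmatrix}\begin{bmatrix}1-\beta & \beta\\ \beta & 1-\beta\end{bmatrix}=\begin{bmatrix}1-\alpha\ast\beta & \alpha\ast\beta\\ \alpha\ast\beta & 1-\alpha\ast\beta\end{bmatrix},
\end{equation*}
these $2\times2$ matrices commute and reproduce the flip-composition law, the semigroup identity for $A_n$ follows at once. With that in hand the remaining sign and degree bookkeeping above is routine.
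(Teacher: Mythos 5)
Your proof is correct, and it takes a genuinely different route from the paper's. The paper verifies $A_n A_n^{*}=(1-2\alpha)^n I$ entrywise, where ${A_n^{*}}_{ij}=(-1)^{i+j}{A_n}_{ij}$: the diagonal entries reduce to a signed binomial sum $\sum_t \binom{n}{t}(-1)^t\alpha^{2t}(1-\alpha)^{2n-2t}=((1-\alpha)^2-\alpha^2)^n$, and the off-diagonal entries are killed by a sign-reversing pairing of intermediate bit-string sub-states $k_1\leftrightarrow k_2$ (keep the positions where $i$ and $j$ agree, flip the rest, with a parity adjustment when $n-s$ is even). You instead exhibit $A_n(\alpha)$ as the $n$-th symmetric power of the $2\times 2$ flip matrix $M(\alpha)$, deduce the semigroup law $A_n(\alpha)A_n(\beta)=A_n(\alpha\ast\beta)$ from $M(\alpha)M(\beta)=M(\alpha\ast\beta)$, and read off the inverse as $A_n\bigl(\tfrac{\alpha}{2\alpha-1}\bigr)$; the sign and the $(1-2\alpha)^{-n}$ factor then drop out of the homogeneity of each summand. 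Your approach buys several things the paper's does not: it explains structurally why the inverse is again a matrix of the same family, it makes the condition $\alpha\neq 1/2$ transparent (solvability of $\alpha\ast\beta=0$), and it sidesteps the delicate off-diagonal cancellation argument, which in the paper is stated somewhat informally (the pairing is described on sub-states while the matrix entries aggregate them, and the parity bookkeeping conflates state labels with bit counts). The paper's argument, in exchange, is elementary and avoids invoking the induced action on homogeneous polynomials. Two small points you should make explicit: the probabilistic reading of the semigroup identity only covers $\beta\in[0,1]$ while the $\beta$ you need is negative for $\alpha<1/2$, so the polynomial-substitution argument (or an appeal to the identity being polynomial in $\beta$) is not optional but essential --- you do supply it; and your identification of ${A_n}_{ij}$ with the coefficient of $x^{j-1}y^{n-j+1}$ matches the intended formula of Proposition~\ref{prop:SRC} with the binomial coefficients read as $\binom{i-1}{s}\binom{n-i+1}{j-i+s}$, i.e.\ with the paper's typographically inverted binomials corrected.
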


Due to the pages limitation, we will show the detailed proof at the end of this paper.  To illustrate our result, an example of the inverse matrix $A_2$ are shown as follows:\\

\begin{small}
$\begin{array}{cc}
A_2^{-1} =\dfrac{1}{(1-2\alpha)^2} \begin{bmatrix}
(1-\alpha)^2 & -2\alpha(1-\alpha) & \alpha^2\\
-\alpha(1-\alpha) &(1-\alpha)^2+\alpha^2 & -\alpha(1-\alpha)\\
\alpha^2  & -2\alpha(1-\alpha) & (1-\alpha)^2 \\
\end{bmatrix}
\end{array}. $
\end{small}

Next, base on the existence of inverse matrix closed form, we will show that a capacity closed form for a  discrete memory-less channel can be established. We note that in \cite{cover2012elements}, the authors said that haven't closed form for channel capacity problem. However, with our approach, the closed form can be established for a wide range of channel with error rate $\alpha$ is small.

\section{Optimize system capacity} 
A discrete memoryless channel is characterized by a channel matrix $A \in \mathbf{R}^{m \times n}$ with $m$ and $n$ representing the numbers of distinct input (transmitted) symbols $x_i$, $i = 1, 2, \dots, m$,  and output (received) symbols $y_j$, $j = 1, 2, \dots, n$, respectively.  
The matrix entry $A_{ij}$ represents the conditional probability that given a symbol $x_i$ is transmitted, the symbol $x_j$ is received.  Let $p = (p_1, p_2, \dots, p_m)^T$ be the input probability mass vector, where $p_i$ denotes the probability of transmitting symbol $x_i$, then the probability mass vector of output symbols $q = (q_1, q_2, \dots, q_n)^T = A^Tp$, where $q_i$ denotes the probability of receiving symbol $y_i$. For simplicity, we only consider the case $n=m$ such that the number of transmitted input patterns is equal the number of received input patterns. The mutual information between input and output symbolsis:
\begin{equation*}
I(X;Y) = H(Y) - H(Y|X),
\end{equation*}
where
\begin{eqnarray*}
 H(Y) &=&-\sum_{j=1}^{j=n}{q_{j}\log{q_j}} \\
H(Y|X) &=& \sum_{i=1}^{m}\sum_{j=1}^{n} {p_i A_{ij}} \log {A_{ij}}.
\end{eqnarray*}
Thus, the mutual information function can be written as:
\begin{equation*}
I(X;Y) = -\sum_{j=1}^{j=n}{(A^Tp)_j\log{(A^Tp)_j}} +\sum_{i=1}^{m}\sum_{j=1}^{n} {p_iA_{ij}} \log {A_{ij}},
\end{equation*}
where $(A^Tp)_j$ denotes the $j$th component of the vector $q = (A^Tp)$.  The capacity $C$ of a discrete memoryless channel associated with a channel matrix $A$ puts a theoretical maximum rate that information can be transmitted over the channel \cite{cover2012elements}.  It is defined as: 
\begin{equation}
\label{eq: capacity}
C = \max_{p}{I(X;Y)}.
\end{equation}
Therefore, finding the channel capacity is to find an optimal input probability mass vector $p$ such that the mutual information between the input and output symbols is maximized. For a given channel matrix $A$, $I(X;Y)$ is a concave function in $p$ \cite{cover2012elements}.  Therefore, maximizing $I(X;Y)$ is equivalent to minimizing $-I(X;Y)$, and the capacity problem can be cast as the following convex problem:

\indent Minimize: 
 \begin{equation}
	\sum_{j=1}^{n}{(A^Tp)_{j}\log{(A^Tp)_j}} -\sum_{i=1}^{m}\sum_{j=1}^{n} {p_i A_{ij}} \log {A_{ij}} \nonumber \\
\end{equation}
\indent Subject to: 
 $$\begin{cases}
& p_i \succeq \mathbf{0}\\
& \mathbf{1}^Tp = 1 \\
\end{cases}$$

Optimal numerical values of $p^*$ can be found efficiently using various algorithms such as gradient methods \cite{grant2008cvx} \cite{boyd2004convex}.  However, in this paper, we try to figure out the closed form for optimal distribution $p$ via KKT condition. The KKT conditions state that for the following canonical optimization problem:

Problem 
\indent Miminize: $f(x)$ \\
\
\indent Subject to: 

$$g_i(x) \le  0, i = 1, 2, \dots n,$$
$$h_j(x) = 0, j = 1, 2, \dots, m, $$

construct the Lagrangian function:
\begin{equation}
\label{sec:lagrangian}
L(x,\lambda, \nu) = f(x) + \sum_{i=1}^n{\lambda_i g_i(x)} + \sum_{j=1}^m{\nu_j h_j(x)},
\end{equation}

then for $i = 1, 2, \dots, n$, $j = 1, 2, \dots, m$, the optimal point $x^*$ must satisfy:

\begin{equation}
\label{eq:kkt1}
\begin{cases}
g_i(x^*) \le 0, \\
h_j(x^*) = 0, \\
\frac{d{L(x, \lambda, \nu)}}{dx}|_{x = x^*, \lambda = \lambda^*, \nu = \nu^*} = 0, \\
\lambda_i^*x_i^* = 0, \\
\lambda_i^* \ge 0.
\end{cases}
\end{equation}

Our transition matrix that is already established in previous part can represent as a channel matrix.  In the optical transmission, for example, the transmission bits are denoted by the different levels of energy, for example, in On-Off Keying code bit ``1" and ``0" is represented by high and low power level. This energy is received by a photo diode and converse directly to the voltage for example. However, these photo diode work base on the aggregate property when collecting all the incident energy, that said, if two channels transmit a bit ``1" then the photo diode will receive the same energy ``2" even though this energy comes from a different pair of channels. Therefore, the received signal is completely dependent  to the number of bits ``1" in transmission side. Hence, in receiver side, the photo diode will recognize $n+1$ states $0,1,2,\dots,n$. From this property, the transition matrix $A$ is the previous section is exactly the system channel matrix. The channel capacity of system, therefore, is determined as an optimization problem in (\ref{eq: capacity}). 

Next, we will show that the above optimization problem can be solved efficiently by KKT condition. We note that our method can establish the closed form for general channel matrix and then the results are applied to special matrix $A_n$. First, we try to optimize directly with input distribution $p$, however, the KKT condition for input distribution is too complicated to construct the first derivation. On the other hand, base on the existence of inverse channel matrix, the output variable is more suitable to work with KKT condition since.  Due to $0 \leq q_j \leq 1$, the Lagrange function from (\ref{sec:lagrangian}) with output variable $q$ is:

$$L(q_j,\lambda_j,\nu_j)= I(X,Y) + \sum_{j=1}^{j=n}{{q_j}{\lambda_j}} + \nu (\sum_{j=1}^{j=n}{q_j}-1) $$ 

Using KKT conditions, at optimal point $q_{j}^{*}$, $\lambda_{j}^{*}$, $\nu^{*}$:  

$$ 
\begin{cases}
q_{j}^{*} \geq 0\\
\sum_{j=1}^{j=n}q_{j}^{*} = 1\\
\nu^{*} - \lambda_{j}^{*} - \dfrac{dI(X,Y)}{dq_{j}^{*}}= 0\\
\lambda_{j}^{*} \geq 0\\
\lambda_{j}^{*} q_{j}^{*}=0
\end{cases}
$$

Because $0 \leq p_i \leq 1, i=1,\dots, (n)$ and $\sum_{i=1}^{n}{p_i} = 1$, so always exist $p_i> 0$. From $q_j = \sum_{i=1}^{i=n}p_i{A}_{ij}$ with $\forall$ $A_{ij}>0$, we can see clearly that $q_j > 0$ with $\forall q_j$ or $q_{j}^{*} > 0$ with $\forall q_j^{*}$.

Therefore with fifth condition, $\lambda_{j}^{*} = 0$ with $\forall \lambda_j^{*}$. Then, we have simplified KKT conditions:
$$
\begin{cases}
\sum_{j=1}^{j=n}q_{j}^{*} = 1\\
\nu^{*} - \dfrac{dI(X,Y)}{dq_{j}^{*}}= 0\\
\end{cases}
$$

The derivations are determined by:

$$\dfrac{dI(X,Y)}{dq_{j}} = \sum_{i=1}^{i=n} {A}_{ji}^{-1} \sum_{j=1}^{j=n} {A}_{ij}\log{A}_{ij} - (1+\log{q_j})$$

Let call:  $$\sum_{i=1}^{i=n} {{A}_{ji}^{-1}} \sum_{j=1}^{j=n} {A}_{ij}\log{A}_{ij}={K}_{j}$$

Next, using derivation of I(X,Y) at $q_{j}=q_j^*$ and last $KKT$ condition:
$$\nu^*={K}_j -(1+\log{q_j}^*)$$

Hence:
$$q_j^* = 2^{{K}_j-\nu^* -1}$$

Next, using first $KKT$ simplified condition, we have the sum of all output states is 1. 
$$\sum_{j=1}^{j=n} 2^{{K}_j-\nu^*-1} = 1$$
$$2^{\nu^*} = \sum_{j=1}^{j=n}{2^{{K}_j-1}}$$

Therefore, $\nu^*$ can be figured out by:
$$\nu^*= \log{\sum_{j=1}^{j=n}{2^{{K}_j-1}}}$$

From the second $KKT$ simplified condition, we can compute $\forall$ $q_j^{*}$:
$$q_j^* = 2^{{K}_j-\nu^* -1}$$

And finally:
$${p^{T}}^{*}={q^{T}}^{*}{A}_{ij}^{-1}$$
 
Due to the channel matrix is a closed form of $\alpha$, the optimal input vector $p$ and output vector $q$ also is a function of $\alpha$. However, we note that since the KKT condition works directly to the output variable $q$, the optimal input $p$ can be invalid $p_i> 1$ or $p_i <0$. In next step, our simulations shown that for  $n \leq 10$ and $\alpha \leq 0.2$, both output and input vector are valid. That said, our approach will be worked with a good system where the error probability $\alpha$ is small. In case of the invalid optimal input vector, the upper bound of channel capacity, of course, will be established.

\section{Conclusion}

 In this paper, our contributions are twofold: (1) establish an inverse closed form for a class of channel matrix based on the error probability $\alpha$; (2) figure out the closed form for channel matrix with small error rate $\alpha$ and determine the upper bound system capacity for a high error rate channel. 
\medskip
\bibliographystyle{unsrt}
\bibliography{sample}

\appendix
\textbf{Proof for Proposition \ref{inverse}.}
\begin{proof}
To simplify our notation, the ``good" and ``bad" channel are represented by bit ``1" and ``0", respectively. Next, we will use the definition to show that: 
$${A_n}  {A_n}^{-1}=I$$ 

If matrix $A_n^*$ is constructed by ${A_n^*}_{ij}=(-1)^{i+j}{A_n}_{ij}$, then we need to show that: 
$${A_n}  {A_n^*} =B = (1-2\alpha)^n I$$

Firstly, we note that the ${A_n}_{ij}$ and ${A_n}^{*}_{ij}$ is only different by sign of the first index ${(-1)^{i+j}}$. Therefore, $B_{ij}$ which is computed by product of row $i$ in matrix ${A_n}_{ij}$ and column $j$ in matrix  ${A_n}^{*}_{ij}$, can be computed by:
 $$B_{ij}= \sum_{k=1}^{k=n+1}{{A_n}_{ik}{A_n^*}_{kj}}$$
 
Note that the ${A_n}_{ik}$ is the probability from state $i$ ``good" channels (with $i-1$ bit ``1" and $n-i+1$ bit ``0") to medium state has $k$ ``good" channels (with $k-1$ bit ``1" and $n-k+1$ bit ``0"). Moreover, if the sign is ignored, then  ${A_n}^{*}_{kj}$ also is the probability going from state $k$ to state $j$, too. However, the state $k$ includes $C{n\choose k-1}$ sub-states which have a same number of ``good" and ``bad" channels.  For example with $n=2$, state $k=2$ includes two sub-states that contains one ``good" and one ``bad" channels are ``10" an ``01". Therefore, the total number of sub-states while $k$ runs from $1$ to $n$ is $\sum_{k=1}^{k=n+1}C{n\choose k-1}= 2^n$ sub-states.  Let compute $B_{ij}$ by divided into two subsets:

\textbf{ Compute $B_{ij}$ for $i=j$}: This means that $B_{ii}$ is the sum of the probability from state $i-1$ bit ``1" go to  states has $k-1$  bit ``1"  then come back to state has $i-1$ bit ``1". In $2^n$ sub-states, we can divide back to $n+1$ categories by the number of different position between $i$ and $k$.

$\bullet$ If all the bit in $i$ and $k$ are the same, then the probability is:
{\small  $$C{n\choose 0}(1-\alpha)^{n}(1-\alpha)^n=C{n\choose 0}(1-\alpha)^{2n}$$}

$\bullet$ If all the bit in $i$ and $k$ different at only one position, then the probability is:
{\small $$C{n\choose 1}(1-\alpha)^{2(n-1)}(1-\alpha)^2$$}

$\bullet$ If all the bit in $i$ and $k$ different at only two positions, then the probability is:
{\small $$C{n\choose 2}(1-\alpha)^{2(n-2)}(1-\alpha)^{2\times2}$$}

$\bullet$ If all the bit in $i$ and $k$ different at all positions, then the probability is:
{\small $$C{n\choose n}(1-\alpha)^{2n}$$}

Therefore, $B_{ii}$ can be determined by the probability of all $n+1$ categories such as:
{\small $$B_{ii}=\sum_{t=0}^{t=n}C{n\choose t} \alpha^{2t} (1-\alpha)^{2n-2t}=({(1-\alpha)^{2} -\alpha^{2}})^{n}=(1-2\alpha)^{n}$$}
\textbf{Compute $B_{ij}$ for i$\neq$ j}: Let  divide ${A_n^*}_{kj}$ into two subsets: $k+j$ is odd  and ${A_n^*}_{kj} < 0$ or $k+j$ is even  and ${A_n^*}_{kj} > 0$, respectively. Therefore, $B_{ij}=\sum_{k=1}^{k=n}{{A_n}_{ik}{A_n^*}_{kj}}$ also is distributed into positive or negative subsets. Next, we will show that the positive subset in  $B_{ij}$ is equal the negative subset then $B_{ij}=0$ for $i \neq j$. Indeed, suppose that state $i$ with $i-1$ bit ``1" go to state $k_1$ and then to back to state $j$ with $j-1$ bit ``1" and $B_{ik_1}$ is positive value.  Next, we will show that existence a state $k_2$ such that $B_{ik_2}$ is negative value and $B_{ik_1}=-B_{ik_2}$. 

Let call $s$ is the number of positions where state $i$ and $j$ have a same bit. Obviously that $s\leq n-1$ due to $i\neq j$. For example if $n=4$ and $i=1111$ and $j=0001$, we have $s=1$ because $i$ and $j$ share a same bit ``1" in the positions fourth. Suppose  that an arbitrary state $k_1$ are picked, we will show how to chose the state $k_2$  with $B_{ik_1}=-B_{ik_2}$. Consider two follows cases:

$\bullet$ If $(n-s)$ is odd. $k_2$ is constructed by maintain $s$ position of $k_1$ where $i$ and $j$ have same bit and flip bit in  the $n-s$ rest positions.

$\bullet$ If $(n-s)$ is even. $k_2$ is constructed by maintain $s+1$ position of $k_1$ where $s$ position are $i$ and $j$ have a same bit and one position where $i$ and $j$ have a different bit, next $n-s-1$ rest positions will be flipped. Note that since $s\leq n-1$ then we are able to flip $n-s-1$ rest positions.

We obviously can see that $k_1$ and $k_2$ satisfied the probability condition $|B_{ik_1}|=|B_{ik_2}|$ due to the number of flipped bit between $i$ and $k_1$ equals the number of flipped bit between $k_2$ and $j$ and  the number of flipped bit between $j$ and $k_1$ equals the number of flipped bit between $k_2$ and $i$.

Next, we will prove that $k_1$ and $k_2$ make $B_{ik_1}$ and $B_{ik_2}$ in different subsets. Indeed, call number of bit ``1" in $k_1$ is $b_1$, number of bit ``1" in $k_2$ is $b_2$, number of bit ``1" in $s$ bit same of $i$ and $j$ is $b_s$, respectively. Therefore, the number of bit ``1" of $k_1$ in $(n-s)$ rest positions is $(k_1-k_s)$, the number of bit ``1" of $k_2$ in $(n-s)$ rest positions is $(k_2-k_s)$.  

$\bullet$ If $(n-s)$ is odd. Since all bit in $(n-s)$ rest positions of $k_1$ is flipped to create $k_2$, then total number of bit ``1" in  $n-s$ bit of $k_1$ and $k_2$ is $(k_1-k_s+k_2-k_s=n-s)$ is odd. So, $(k_1+k_2)$ should be an odd number.  That said $(k_1-k_2)$ is odd or $(k_1+j)-(k_2+j)$ is odd. Therefore, $B_{ik_1}$ and $B_{ik_2}$ bring the contradict sign.

$\bullet$ If $(n-s)$ is even. Because, we fix one more position to create $k_2$, then number of flipped  bit $(n-s-1)$ is odd number. If one more bit is fixed in $k_1$ is ``0", we have a same result with case $(n-s)$ is odd. If fixed bit is ``1", similarly in first case $(k_1-k_s-1)+(k_2-k_s-1)=n-s-1$ is odd number, therefore $(k1+k2)$ is odd number. That said $(k_1-k_2)$ is odd or $(k_1+j)-(k_2+j)$ is odd. Therefore, $B_{ik_1}$ and $B_{ik_2}$ bring the contradict sign.

Therefore, the state $k_2$ always can be created from a random state $k_1$ and $B_{ik_1}$ and $B_{ik_2}$ bring a contradict sign. That said for $i \neq j$, $B_{ij}=0$. Therefore:

$$B={(1-2\alpha)^{n}} I$$

The Proposition \ref{inverse}, therefore, are proven.
\end{proof}

\end{document}